\NewDocumentCommand\until{g}{%
  \mathcal{U}\IfNoValueTF{#1}{}{_{[#1]}}}
\NewDocumentCommand\globally{g}{%
  \square\IfNoValueTF{#1}{}{_{[#1]}}}
\NewDocumentCommand\eventually{g}{%
  \lozenge\IfNoValueTF{#1}{}{_{[#1]}}}
\NewDocumentCommand\release{g}{%
  \mathcal{R}\IfNoValueTF{#1}{}{_{[#1]}}}
\NewDocumentCommand\dut{g}{%
  \delta_{\mathtt{UnsafeTransition}}\IfNoValueTF{#1}{}{^{(#1)}}}
\NewDocumentCommand\dst{g}{%
  \delta_{\mathtt{SafeTransition}}\IfNoValueTF{#1}{}{^{(#1)}}}
\NewDocumentCommand\dg{g}{%
  \delta_{\mathtt{Goal}}\IfNoValueTF{#1}{}{^{(#1)}}}
\NewDocumentCommand\du{g}{%
  \delta_{\mathtt{Unsafe}}\IfNoValueTF{#1}{}{^{(#1)}}}
\NewDocumentCommand\di{g}{%
  \delta_{\mathtt{Input}}\IfNoValueTF{#1}{}{^{(#1)}}}
\NewDocumentCommand\dd{g}{%
  \delta\IfNoValueTF{#1}{}{^{(#1)}}}
\NewDocumentCommand\dSet{mg}{%
  \delta_{\mathtt{#1}}\IfNoValueTF{#2}{}{^{(#2)}}}
\NewDocumentCommand\Set{mg}{%
  \mathtt{#1}\IfNoValueTF{#2}{}{^{(#2)}}}
\NewDocumentCommand\Init{g}{%
  \mathtt{Init}\IfNoValueTF{#1}{}{^{(#1)}}}
\NewDocumentCommand\Goal{g}{%
  \mathtt{Goal}\IfNoValueTF{#1}{}{^{(#1)}}}
\NewDocumentCommand\Unsafe{g}{%
  \mathtt{Unsafe}\IfNoValueTF{#1}{}{^{(#1)}}}
\NewDocumentCommand\Input{g}{%
  \mathtt{Input}\IfNoValueTF{#1}{}{^{(#1)}}}
\NewDocumentCommand\UnsafeInput{g}{%
  \mathtt{UnsafeInput}\IfNoValueTF{#1}{}{^{(#1)}}}
\NewDocumentCommand\UnsafeTransition{g}{%
  \mathtt{UnsafeTransition}\IfNoValueTF{#1}{}{^{(#1)}}}
\NewDocumentCommand\SafeTrans{g}{%
  \mathtt{SafeTransition}\IfNoValueTF{#1}{}{^{(#1)}}}
\NewDocumentCommand\Gm{g}{%
  G_{-}\IfNoValueTF{#1}{}{^{(#1)}}}
\NewDocumentCommand\Gz{g}{%
  G_{0}\IfNoValueTF{#1}{}{^{(#1)}}}
\NewDocumentCommand\Gp{g}{%
  G_{p}\IfNoValueTF{#1}{}{^{(#1)}}}
\NewDocumentCommand\Guard{g}{%
  \mathtt{Guard}\IfNoValueTF{#1}{}{^{(#1)}}}
\NewDocumentCommand\Inv{g}{%
  \mathtt{Inv}\IfNoValueTF{#1}{}{^{(#1)}}}
\NewDocumentCommand\ta{g}{%
  t_{a}\IfNoValueTF{#1}{}{^{(#1)}}}
\NewDocumentCommand\tf{g}{%
  t_{f}\IfNoValueTF{#1}{}{^{(#1)}}}
\NewDocumentCommand\nta{g}{%
  \nominal{t}_{a}\IfNoValueTF{#1}{}{^{(#1)}}}
\NewDocumentCommand\ntf{g}{%
  \nominal{t}_{f}\IfNoValueTF{#1}{}{^{(#1)}}}
\newcommand{\nominal}[1]{\widetilde{#1}}
\newtheorem{theorem}{Theorem}
\newtheorem{definition}[theorem]{Definition}
\newtheorem{problem}[theorem]{Problem}
\newcommand{\abs}[1]{\left\lvert #1\right\rvert}
\newcommand{\norm}[1]{\left\lVert #1\right\rVert}
\newcommand{\defaultIntVar}{x}
\NewDocumentCommand\myint{O{}O{}mO{\defaultIntVar}}{%
  \int_{#1}^{#2}#3\;\mathrm{d}#4%
}
\newcommand{\raisemath}[1]{\mathpalette{\raisem@th{#1}}}
\newcommand{\raisem@th}[3]{\raisebox{#1}{$#2#3$}}
\newcommand{\R}{\mathbb{R}}
\newcommand{\argmin}{\mathrm{arg}\min}
\def\ie{\textit{i.e.}}
\title{{\Large \textbf{An MILP Approach for Real-time Optimal
      Controller Synthesis \\ with Metric Temporal Logic
      Specifications}}}
\author{Sayan Saha and A. Agung Julius\thanks{Both the authors are
    with the Department of Electrical, Computer, and Systems
    Engineering, Rensselaer Polytechnic Institute, Troy, NY 12180,
    Email: \texttt{sahas3,julia2@rpi.edu}.}}
\begin{document}

\maketitle

\begin{abstract}

  The fundamental idea of this work is to synthesize reactive
  controllers such that closed-loop execution trajectories of the
  system satisfy desired specifications that ensure correct system
  behaviors, while optimizing a desired performance criteria. In our
  approach, the correctness of a system's behavior can be defined
  according to the system's relation to the environment, for example,
  the output trajectories of the system terminate in a goal set
  without entering an unsafe set. Using Metric Temporal Logic (MTL)
  specifications we can further capture complex system behaviors and
  timing requirements, such as the output trajectories must pass
  through a number of way-points within a certain time frame before
  terminating in the goal set. Given a Mixed Logical Dynamical (MLD)
  system and system specifications in terms of MTL formula or simpler
  reach-avoid specifications, our goal is to find a closed-loop
  trajectory that satisfies the specifications, in non-deterministic
  environments. Using an MILP framework we search over the space of
  input signals to obtain such valid trajectories of the system, by
  adding constraints to satisfy the MTL formula only when necessary,
  % in a lazy iterative encoding scheme
  to avoid the exponential complexity of solving MILP problems. We also
  present experimental results for planning a path for a mobile robot
  through a dynamically changing environment with a desired task
  specification.

  {\bf Keywords}: temporal logic, reactive controller, mixed integer
  linear programming.
  
\end{abstract}

\section{Introduction}
\label{sec:intro}

At a high level, we synthesize a controller, by optimizing a desired
cost function that provides inputs to a system, such that the behavior
of the output satisfies some given requirements, such as the system
avoids some unsafe behaviors and eventually terminates at some desired
safe conditions. In recent years, a lot of attention has been given to
temporal logic constraints, which have been used extensively for
expressing reach-avoid specifications, safety requirements, and
sequencing of tasks to be performed. These allow the designer to
specify time-dependent constraints; for example, we may require that
some property will eventually hold, or that some property holds until
some other property is true. Using temporal logics allows much greater
expressivity in defining desired system behaviors than their
non-temporal counterparts, but at the cost of additional difficulty in
satisfying the constraints.

A common approach to synthesize controllers to satisfy Linear Temporal
Logic (LTL) properties is to create a finite abstraction model of the
dynamical system which can be then used to synthesize controllers
using an automata-based approach \cite{alur2000discrete,
  fainekos2009temporal, kloetzer2008fully, fainekos2005hybrid}. This
approach, however, results in high computational complexity due to
quantization of the finite abstraction model and the size of the
automaton may also be exponential in the length of the
specification. In \cite{decastro2014synthesis} the authors focus on
coarse abstractions of the state-space to alleviate the increasing
complexity problems as state-space dimension increases and also
synthesize controllers for satisfying reactive tasks. A fine
abstraction model for systems with complex dynamics is presented in
\cite{Kressgazitcorrect} to synthesize reactive controllers by
planning paths for a finite horizon, at the cost of
\textit{completeness} of the approach. Path planning using iterative
sampling-based approach, while optimizing a certain cost and
guaranteeing temporal logic specifications are presented in
\cite{Karaman2012sampling, livingston2015cross}. Recently, researchers
have been using mixed integer-linear programming to solve an optimal
control problem by encoding LTL specifications
\cite{karaman2011linear, wolff2014optimization, karaman2008optimal},
Metric Temporal Logic (MTL) specifications \cite{karaman2008vehicle},
and Signal Temporal Logic (STL) specifications \cite{raman2014model}
as mixed integer-linear constraints on the optimization
variables. Reactive controller synthesis satisfying temporal logic
specifications using receding horizon control has been considered in
\cite{wongpiromsarn2012receding, ulusoy2013receding,
  raman2015reactive}.

In this paper we consider MTL specifications, which augment the
temporal operators with a metric interval or time bounds over which
the operator is required to hold \cite{fainekos2009robustness,
  koymans1990specifying, fainekos2006robustness}. We consider the task
of determining an input signal for a Mixed Logical Dynamical (MLD)
system such that the system's output satisfies a given MTL
specification. We address this problem by casting it as a Mixed
Integer Linear Program (MILP).  This concept has been previously
applied by \cite{raman2014model, raman2015reactive} by encoding an STL
specification \(\phi\)
in terms of an MILP, where a variable is associated with each time step
and predicate, indicating the degree by which that predicate is
satisfied by the associated output trajectory point of the system. The
temporal operators and logical operators in the given specification
are then broken down into a set of boolean operators, each of which
are assigned a boolean variable along with a set of constraints that
guarantee the variable is true when the boolean operator is satisfied,
and the boolean variable is false when the operator is not satisfied.
In all, this formulation introduces $O(N\cdot\abs{\phi})$ boolean
variables and constraints, where $\abs{\phi}$ denotes the number of
operators in the STL specification and \(N\)
is the horizon length over which the input signal is to be
determined. We expect that linearly increasing the length of the
trajectory or length of the STL specification will cause the
time-complexity to grow exponentially since solving a MILP is
exponential in the number of binary decision variables in the
worst-case scenario \cite{garey1979computers}. This can quickly render
the MILP intractable even for the relatively small problems. This
paper is motivated to circumvent this issue, such that the controller
synthesis problem can be solved in real-time and can be applied in
practical applications of robot motion path planning with temporal
specifications. We propose a method whereby we dynamically identify
the critical time-points over the simulation horizon, where the system
trajectory violates the given MTL specifications most and introduce
boolean variables and constraints only for those critical time-points
iteratively and resolve the problem till the system trajectory
satisfies the specification. This leads to a much smaller MILP problem
to be solved leading to a significant reduction in the time required
to generate the input signal. This approach is very closely related to
the formulation presented in \cite{earl2005iterative} for generating
trajectories to avoid obstacles. We show the effectiveness of our
algorithm by running experiments on a \textit{m3pi} robot to plan a
trajectory through a dynamically changing environment in real-time, so
as to not hit any obstacles and reach a desired location within a
given time limit.

\section{Preliminaries}
\label{sec:preliminaries}
We consider discrete-time systems of the form
\begin{align}
  \label{eq:sys-dyn}
  x_{k+1} = F(x_{k}, u_{k}),
\end{align}
where,
\(x_{k} \in \mathcal{X} \subseteq \mathbb{R}^{n_{c}} \times
\{0,1\}^{n_{l}}\)
are the continuous and binary/logical states, and
\(u_{k} \in U \subseteq \mathbb{R}^{m_{c}} \times \{0,1\}^{m_{l}}\)
are the continuous and binary/logical control inputs at the time
indices \(k = 0, 1, \ldots \).
We denote the system trajectory at time index \(k\)
under the control input
\(\mathbf{u}^{k} = \{u_{1}, u_{2}, \cdots, u_{k}\}\)
starting from a given initial condition \(x_{0} \in \mathcal{X}\),
by \(x(\mathbf{u}^{k}) = \{x_{0}, x_{1}, \cdots, x_{k}\}\).
This system model provides a set of constraints for the optimization
procedure, such that at any time index \(k\)
the resulting trajectory \(x(\mathbf{u}^{k})\)
satisfies the system dynamics given in (\ref{eq:sys-dyn}). These
constraints can be easily formulated in terms of mixed integer-linear
program if the system under consideration belongs to the class of
mixed-logical dynamical systems \cite{Bemporad99}, that includes
linear hybrid systems, constrained or unconstrained linear systems,
piece-wise affine systems. Differentially flat and feedback
linearizable systems \cite{Khalil, Julius12} can also be
considered if the temporal logic specifications are in terms of the
flat and observable outputs respectively.

An MTL formula is a formal language, that can be used to express
desired properties that a system must satisfy with certain timing
requirements. We consider the temporal
operators \textit{eventually} (\( \eventually{\mathcal{I}} \)),
\textit{always} (\( \globally{\mathcal{I}} \))
and \textit{until} (\( \until{\mathcal{I}}\)),
and logical operators, such as, \textit{conjunction} (\(\wedge\)),
\textit{disjunction} (\(\vee\)),
\textit{negation} (\(\neg\)),
and \textit{implication} ($\rightarrow$), that can be used to combine
\textit{atomic propositions} to form the MTL formula. We associate a
set \(\mathcal{O}(p)\subseteq \mathcal{X}\)
with each atomic proposition \(p\),
such that \(p\)
is true at time index $k$ if and only if $x_{k}\in\mathcal{O}(p)$.

For example, using MTL one can easily express a desired system
behavior that ``the system trajectory should never enter some unsafe
set \(\mathcal{O}(p_{\Unsafe})\) and terminate in some desired set
\(\mathcal{O}(p_{\Goal})\) within time \(t_{3}\) to \(t_{4}\) and
should pass through the set \(\mathcal{O}(p_{\mathcal{W}})\) within
the time frame \(t_{1}\) to \(t_{2}\) and must stay in
\(\mathcal{O}(p_{\mathcal{W}})\) once the trajectory enters it for
\(s\) units of time'' as
\begin{align*}
  \phi = \globally \neg p_{\Unsafe} \wedge \eventually_{[t_{3}, t_{4}]}
  p_{\Goal} \wedge \eventually_{[t_{1}, t_{2}]} \globally_{[0,s]}
  p_{\mathcal{W}}.
\end{align*}

\begin{definition}
\label{def:d1} We define a system trajectory that satisfies the
dynamics given in (\ref{eq:sys-dyn}) to be an (in)feasible trajectory
if it (falsifies)satisfies the given MTL specification.
\end{definition}

The robustness measure, \(\rho_{\phi}\)
of a system trajectory defines how robustly a system trajectory
satisfies or falsifies the given MTL specification. The measure takes
positive values if the trajectory satisfies the specification, and
negative values otherwise. Intuitively, the robustness degree of an
(in)feasible trajectory \(x(\mathbf{u}^{k})\)
is the largest distance that we can independently perturb the points
along the trajectory and maintain (in)feasibility. This defines a tube
around the original trajectory such that any trajectory within this
tube is guaranteed to satisfy (or falsify) the specification.

This concept is demonstrated in Fig.~\ref{fig:robBall}.  In this
figure we consider the MTL specification
\begin{align*}
  \phi^{\prime} &= \globally{t_{0}, t_{2}} \neg p_{\mathcal{B}} \wedge
  \globally{T,T}
  p_{\mathcal{A}},
\end{align*}
which states that between times $t_{0}$ and $t_{2}$ the trajectory
should avoid the set $\mathcal{B}$ and at time $T$ the trajectory is
in set $\mathcal{A}$.  Two trajectories are shown:
\begin{description}
\item[\( x(\mathbf{u}^{k}_{1})\)] ~ an infeasible trajectory;
\item[\( x(\mathbf{u}^{k}_{2})\)] ~ a feasible trajectory.
\end{description}
Note that the trajectory \( x(\mathbf{u}^{k}_{1})\)
has a large negative robustness, denoted by $\rho_{\phi^{\prime}}^{1}$
and determined by the distance between the trajectory at time $t_{0}$,
\ie, \(x_{k_{t_{0}}}\)
(\(k_{t}\)
represents the time index corresponding to the time \(t\)),
and the boundary of the predicate $\mathcal{B}$.  We call this time
\(t_{0}\)
the \emph{critical time} and the predicate \(\mathcal{B}\)
the \emph{critical predicate}.  By moving the \emph{critical point}
\(x_{k_{t_{0}}}\)
on the trajectory by any amount greater than this distance, we can
push it outside of $\mathcal{B}$ in order to satisfy the
specification.  Analogously, trajectory \( x(\mathbf{u}^{k}_{2})\)
has a smaller positive robustness, denoted by
$\rho_{\phi^{\prime}}^{2}$ and determined by the distance between the
trajectory at time $t_{1}$ and the boundary of the predicate
$\mathcal{B}$.  This is because the trajectory at $t_{1}$ is closest
to falsifying the specification, and would need to be moved by at
least this amount in order to do so.

The tool \texttt{TaLiRo} \cite{annpureddy2011s} can be used to
calculate the robustness, critical time, and critical predicate for a
trajectory and MTL specification.  The latter two values will be used
to determine which constraints to add at each iteration of our method,
presented in Section \ref{sec:ctrlSynthesis}.

Finally, we assume that the set $\mathcal{O}(p)$ associated with each
predicate $p$ is polyhedral, defined by \(f\)
faces.  In this case, we can represent the set of points in the set
uniquely by
$\mathcal{O}(p) = \{x~|~Ax \leq b,~A\in\R^{f\times n}, ~ n = n_{c} +
n_{l}, ~b\in\R^{f}\}$,
with rows of \(A\)
normalized to unit vectors. Further, we assume the MTL formula is
expressed in negation normal form, which requires that the formula be
transformed such that all negations, $\neg$, only appear immediately
before a predicate. If we require predicate $p$ to be true at time
index \(k\),
that is, the trajectory at time index \(k\)
has to be in $\mathcal{O}(p)$, then we can represent this via the
constraints
\begin{align}
  \label{eq:1}
  Ax_{k} \leq b.
\end{align}
This requires the trajectory point \(x_{k}\)
to lie in the intersection of all of the half-spaces of the faces of
the predicate set \(\mathcal{O}(p)\),
which by definition is polyhedron.

Conversely, if we require predicate $p$ to be false at time index $k$,
that is, the trajectory at time $k$ is not in $\mathcal{O}(p)$, then
we can represent this via the constraints using the \textit{big-M}
formulation method by introducing a new binary decision vector \(z\)
as
\begin{align}
  \label{eq:3}
  Ax_{k} + Mz & \geq b, M\in\R_{+},~z\in \{0,1\}^{f}, \\
  \label{eq:4}
  \sum_{i} z_{i} &\leq f - 1,
\end{align}
where $M$ is a value large enough to make the corresponding constraint
hold for any allowable value of $x_{k}$. This requires the point
\(x_{k}\)
to lie outside of at least one half-space defining the polyhedron,
which is both necessary and sufficient for the point \(x_{k}\)
to lie outside of the polyhedron.

\begin{figure}[tp]
  \begin{center}
    \includegraphics[scale = 0.65]{./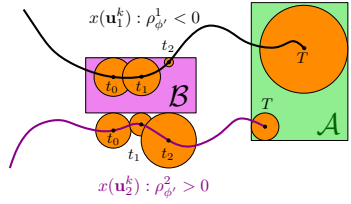}
    \caption{Illustration of robustness degree for MTL specification.}
    \label{fig:robBall}
  \end{center}
  \vspace{-0.25in}
\end{figure}

\section{Controller Synthesis}
\label{sec:ctrlSynthesis}
\begin{problem}
  \label{prb:1}
  Given an MLD system of the form (\ref{eq:sys-dyn}), initial state
  \(x_{0} \in \mathcal{X}\),
  trajectory length \(N\),
  correct system behavior defined in terms of an MTL specification
  \(\phi\),
  a desired robustness measure \(\rho_{\phi}^{d}\)
  and a performance objective \(J\), find
  \begin{align*}
    \underset{\mathbf{u}^{N}}{\argmin} ~ & J(x(\mathbf{u}^{N}), \mathbf{u}^{N}) \\
    \text{subject to} ~ & \rho_{\phi}(x(\mathbf{u}^{N})) \geq \rho_{\phi}^{d}.
  \end{align*}
\end{problem}

\begin{algorithm}[tp]
  \caption{Open Loop Controller Synthesis}
  \label{alg:1}
  \begin{algorithmic}[1]
   \State Initialize MILP with continuous state and input variables
    and constraints enforcing the linear dynamics in
    (\ref{eq:sys-dyn})
    \begin{align*}
      \text{MILP-cur} := \left \{ \begin{aligned}
          \underset{\mathbf{u}^{N}}{\argmin} ~ 
          & J(x(\mathbf{u}^{N}), \mathbf{u}^{N})\\
          \text{s.t.} ~ x_{k+1} = & F(x_{k}, u_{k}), ~ x(\mathbf{u}^{N})
          = \{x_{i}\}_{i = 0}^{N}
        \end{aligned} \right.
    \end{align*}
    \State Solve MILP-cur
    \State Run \texttt{TaLiRo} on resulting trajectory to determine
    robustness, critical time, and critical predicate.
    \While{robustness $< 0$}
      \If{critical predicate is a safe predicate} 
      \State \begin{align*}
               \text{MILP-cur} := \left \{\begin{aligned}
                                     & \text{MILP-cur}, \\
                                     \text{s.t.} ~ & Ax_{k} \leq b.
                                   \end{aligned} \right .
             \end{align*}
      \Else
      \State \begin{align*}
               \text{MILP-cur} := \left \{ \begin{aligned}
                                          & \text{MILP-cur} \\
                                          \text{s.t.} ~ & Ax_{k} + Mz \geq
                                          b,\\
                                          & \sum_{i} z_{i} \leq f - 1.
                                        \end{aligned} \right .
             \end{align*}
             \EndIf
             \State Resolve MILP-cur
             \If{MILP-cur is an infeasible problem}
             \State Return ``No feasible trajectory exists.''
           \Else   
           \State Rerun \texttt{TaLiRo}
           \EndIf
           \EndWhile
           \State Return \(x(\mathbf{u}^{N}), \mathbf{u}^{N}\)
    \end{algorithmic}
\end{algorithm}
 
We propose a heuristic approach given in Algorithm~\ref{alg:1} to
solve the controller synthesis problem. The basic idea is to run the
MILP first with only the system dynamic constraints (i.e. without MTL
constraints). This formulation only contains boolean variables
associated with the system dynamics in (\ref{eq:sys-dyn}). If the MTL
constraints are already satisfied, then we are done. Otherwise, we
find the point that in some sense corresponds to the largest violation
of the specification, and require this point to satisfy the
corresponding predicate, and repeat.  If the point is required to lie
inside the polyhedron, then only the linear constraints \eqref{eq:1}
need to be added.  If the point is required to lie outside of the
polyhedron, then the $ f $ binary variables $z$ and the linear
constraints \eqref{eq:3} need to be added. However, note that the
solution of the MILP problem can satisfy the constraints given by
(\ref{eq:1}) or (\ref{eq:3}) at equality and hence the resulting
system trajectory though a feasible one, will have robustness measure
\(\rho_{\phi}(x(\mathbf{u}^{N})) \geq 0\).
Hence, in order to obtain a system trajectory with a desired
robustness measure \(\rho_{\phi}^{d}\),
we resize the predicate sets \(\mathcal{O}(p)\)
by \(\rho_{\phi}^{d}\).
If the predicate \(p\)
is such that it is a safe (unsafe) predicate, or in other words, the
system can (never) visit the predicate, we shrink (bloat) the size
of the predicate such that the new predicate set is given by
\begin{align}
  \tilde{\mathcal{O}}(p) = \{x ~|~ Ax \leq \tilde{b}\}, ~ \tilde{b} =
  b \pm \rho_{\phi}^{d} \mathbf{1}_{f},
  \label{eq:predRobust}
\end{align}
where, \(\mathbf{1}_{f}\) is a column vector of ones of size \(f\).
Identifying the safe and unsafe predicates can be done easily by
parsing through the given MTL specification and marking the predicates
with a negation (\(\neg\)) in front of them to be unsafe and the rest
to be safe predicates.

\begin{figure*}[!t]
  \centering \subfloat[\label{subfig:a} Initial infeasible trajectory
     \(x(\mathbf{u}^{k}_{1})\):
     critical constraint corresponds to
     \(t_{0}\).]{\includegraphics[scale=0.65]{./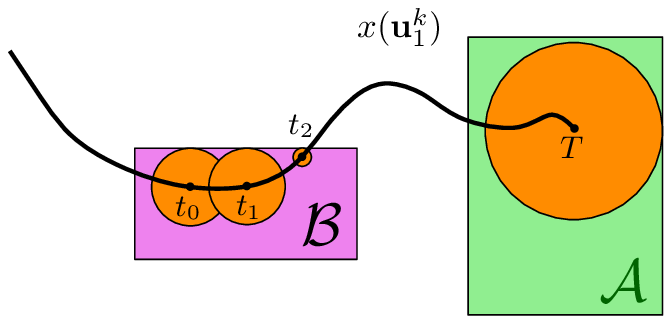}}
       \hfil \subfloat[\label{subfig:b} Infeasible solution trajectory
     \(x(\mathbf{u}^{k}_{1^{\prime}})\)
     after solving the MILP first time: critical constraint corresponds
     to
     \(t_{1}\).]{\includegraphics[scale=0.65]{./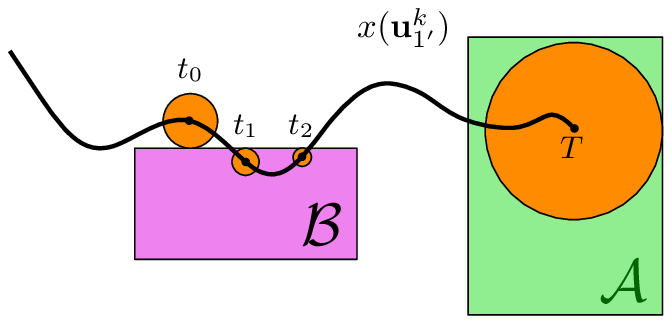}}
     \hfil \subfloat[\label{subfig:c} Feasible solution trajectory
     \(x(\mathbf{u}^{k}_{1^{\prime\prime}})\)
     after solving the MILP second
     time.]{\includegraphics[scale=0.65]{./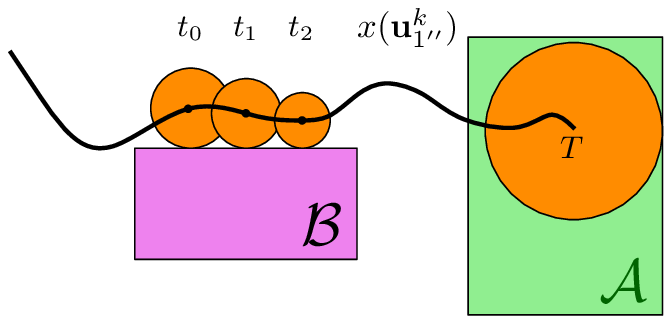}}
    \caption{Illustration of iterative addition of critical constraints
     in MILP formulation.} 
   \label{fig:critConsIter}
    \vspace{-0.2in}
\end{figure*}

The motivating idea behind this approach is that we do not require to
put the constraints given by (\ref{eq:1})-(\ref{eq:3}) at every time
points of the trajectory; constraining only a few critical time points
will result in the whole system trajectory to satisfy the MTL
specifications in most cases. For instance (see Fig.
\ref{fig:critConsIter}), if we want the infeasible trajectory
\(x(\mathbf{u}^{k}_{1})\)
to satisfy the MTL specification \(\phi^{\prime}\)
we first identify that the critical time and predicate are the time
\(t_{0}\)
and \(\mathcal{B}\)
respectively by running \texttt{TaLiRo} as in Step \(3\)
of Algorithm \ref{alg:1}. The critical constraint corresponding to the
time \(t_{0}\)
is then added to the MILP formulation, such that the trajectory point
at \(t_{0}\)
is now constrained to lie outside the critical predicate
\(\mathcal{B}\)
and the new MILP is solved as in steps \(5\)
and \(6\)
of Algorithm \ref{alg:1}. Assume, the solution of the MILP is
\(x(\mathbf{u}^{k}_{1^{\prime}})\).
Since this resulting trajectory is still infeasible as found in Step
\(7\),
we repeat the process once again by adding constraints corresponding
to the time \(t_{1}\).
After resolving the newer MILP, assume we obtain the trajectory
\(x(\mathbf{u}^{k}_{1^{\prime\prime}})\),
which turned out to be a feasible trajectory and hence the Algorithm
\ref{alg:1} terminates. Note that we only added constraints for two
time instances and do not need to introduce the additional binary
variables associated with the trajectory point at time \(t_{2}\).
The goal is to reduce the computation time by iteratively solving
much smaller MILPs, rather than the one large MILP.

However, iteratively adding critical constraints to the MILP problem
in this way will result in an additional overhead in terms of encoding
the MILP problem each time a new constraint is added. To circumvent this
issue, we introduce a new scalar binary parameter \(\sigma_{(i,j)}\)
for each predicate \(i\)
and each time index \(j\)
and modify the linear constraints given in (\ref{eq:1}) and
(\ref{eq:3}) for safe and unsafe predicates respectively as
\begin{align}
  \begin{aligned}
    Ax_{k}\sigma_{(\cdot, k)} \leq \tilde{b} \sigma_{(\cdot, k)}, \\
    (Ax_{k} + Mz)\sigma_{(\cdot, k)} \geq \tilde{b} \sigma_{(\cdot, k)}.
  \end{aligned}
  \label{eq:4}
\end{align}
Note that if \(\sigma_{(\cdot, k)} = 0\),
then the constraint is trivially satisfied and hence relaxed. Only
when \(\sigma_{(\cdot, k)} = 1\)
the constraint is required to hold. Using this modification, we encode
the linear constraints for all the predicates at all the time indices
\( k = \{0, 1, \cdots, N\}\)
based on whether the predicate is safe and unsafe right at the
beginning. We also set all \(\{\sigma_{(i,j)}\}\) to zero. Afterwards,
as we proceed through the steps of Algorithm \ref{alg:1}, we set one
of \(\{\sigma_{(i,j)}\}\) to \(1\) corresponding to the critical
predicate and the critical time for that iteration. In this way, we
add the new constraints to the MILP problem iteratively without having
to encode the problem at each step.

In general, this method is not guaranteed to
find a feasible solution if one exists, and is not guaranteed to
return the optimal trajectory, even if a feasible solution is
found. However for a fragment of MTL specifications we have the
following result.
\begin{theorem}
  \label{thm:1}
  If the MTL specification consists of only conjunctions ($\wedge$)
  and the globally operator $(\globally)$, then if Algorithm
  \ref{alg:1} finds a feasible trajectory then it is the
  optimal trajectory. If Algorithm \ref{alg:1} fails to find any
  feasible trajectory, then Problem \ref{prb:1} is infeasible.
\end{theorem}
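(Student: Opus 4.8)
The plan is to exploit the special structure of the conjunction--globally fragment: I will show that such a specification encodes a \emph{fixed} finite family of constraints admitting no disjunctive choice, so that every MILP that Algorithm~\ref{alg:1} solves is a relaxation of one and the same problem, namely Problem~\ref{prb:1} written out as an MILP. Both conclusions then follow from a single monotonicity fact about relaxations.

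First I would characterize the feasible set of Problem~\ref{prb:1}. Because $\phi$ is built only from $\wedge$ and $\globally$ and is in negation normal form, unfolding the (possibly nested) globally operators shows that $\phi$ requires a designated predicate to hold at each index of a designated set of time indices -- one requirement per (predicate, time-index) pair, with no alternatives. After resizing each predicate set by $\rho_{\phi}^{d}$ as in~\eqref{eq:predRobust}, collect the associated constraints~\eqref{eq:1} (safe predicates) and~\eqref{eq:3} (unsafe predicates) into a fixed set $\mathcal{C}$. The structural claim I must establish is the equivalence
\[
  \rho_{\phi}\big(x(\mathbf{u}^{N})\big) \ge \rho_{\phi}^{d}
  \iff
  x(\mathbf{u}^{N}) \text{ satisfies every constraint in } \mathcal{C}.
\]
This holds because $\wedge$ and $\globally$ both carry \emph{infimum}-based robustness semantics, so $\rho_{\phi}$ is the minimum of the signed predicate distances over all (predicate, time-index) pairs: no conjunct and no time index can be traded against another. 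This is exactly where the fragment is used -- an $\eventually$ or a $\vee$ would contribute a supremum, hence a real choice, and $\mathcal{C}$ would cease to be fixed.

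Next I would argue that every constraint Algorithm~\ref{alg:1} ever adds is an element of $\mathcal{C}$. \texttt{TaLiRo} returns the critical (predicate, time) pair at which the current trajectory most violates $\phi$; by the equivalence above this pair is the argument realizing the minimum signed distance, i.e. a requirement in $\mathcal{C}$, and the critical constraint installed in the body of the while-loop is precisely its encoding. A pair whose constraint is already enforced has nonnegative signed distance and so can never again be the site of the worst (negative) violation; hence each iteration activates a new element of $\mathcal{C}$, the loop runs at most $\abs{\mathcal{C}}$ times, and at every stage MILP-cur is Problem~\ref{prb:1} with some subset $S \subseteq \mathcal{C}$ of its MTL constraints dropped. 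Writing $F$ for the feasible set of Problem~\ref{prb:1} and $G_{S}$ for that of MILP-cur, dropping constraints only enlarges the feasible set, so $F \subseteq G_{S}$.

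Both statements now drop out of $F \subseteq G_{S}$. If MILP-cur is ever infeasible then $G_{S} = \emptyset$, whence $F \subseteq G_{S} = \emptyset$ and Problem~\ref{prb:1} is infeasible. If instead the algorithm returns $x^{*} = \argmin_{G_{S}} J$, then $x^{*}$ has nonnegative robustness, hence satisfies all of $\mathcal{C}$, hence lies in $F$; combining $x^{*} \in F$ with $J(x^{*}) = \min_{G_{S}} J \le \min_{F} J$ (valid since $F \subseteq G_{S}$) gives $J(x^{*}) = \min_{F} J$, so $x^{*}$ is optimal for Problem~\ref{prb:1}. I expect the main obstacle to be the structural equivalence of the second paragraph: making it rigorous requires an induction over the formula using the formal MTL robustness semantics to confirm that the conjunction--globally fragment collapses to a pure minimum of signed predicate distances, so that ``critical constraint'' always names a \emph{necessary} constraint of $\mathcal{C}$ rather than one option among several; the remaining monotonicity steps are routine once this is in hand.
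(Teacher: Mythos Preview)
Your proposal is correct and follows essentially the same route as the paper: both arguments rest on the observation that, for the $\wedge/\globally$ fragment, every MILP-cur is a relaxation of the full Problem~\ref{prb:1} (your $F \subseteq G_{S}$, the paper's $U^{*} \subseteq U^{\prime}$), from which optimality and infeasibility transfer immediately. Your version is more thorough---you make explicit the structural equivalence via the infimum-based robustness semantics, argue termination, and separately dispatch the infeasibility clause---whereas the paper's proof states the containment and its consequence in a couple of sentences and leaves the infeasibility direction implicit.
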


\begin{proof}
  A full-scale optimization approach for solving Problem \ref{prb:1}
  with such MTL specifications will require constraints given in
  either (\ref{eq:1}) or (\ref{eq:3}) to hold at all the time indices
  for all the predicates in the specification. Whereas, Algorithm
  \ref{alg:1} requires only a subset of those constraints to hold. Let
  us denote the search-spaces explored in full-scale optimization and
  in the final iteration of Algorithm \ref{alg:1} by \(U^{*}\)
  and \(U^{\prime}\)
  respectively. Since, any feasible trajectory for the full-scale
  optimization problem is also a feasible trajectory for Algorithm
  \ref{alg:1} we have that, \( U^{*} \subseteq U^{\prime}\).
  Now, if Algorithm \ref{alg:1} returns a feasible solution optimal
  over \(U^{\prime}\),
  it should also be a feasible solution in the \(U^{*}\)
  space, according to construction of Algorithm \ref{alg:1}, implying
  it is also the optimal solution. However, if the
  MTL specification involves \textit{disjunction} or other temporal
  operators that can be broken down in terms of \textit{disjunction}
  (either this holds \textit{or} that holds), then not all feasible
  trajectories of full-scale optimization problem are feasible for
  Algorithm \ref{alg:1}. If some disjunction is the critical portion
  of the specification that defines the robustness, Algorithm
  \ref{alg:1} will require the trajectory to satisfy one particular
  predicate in the \textit{disjunction}, however the full-scale
  optimization will require the trajectory to satisfy any of the
  predicates in the \textit{disjunction} and hence
  \(U^{*} \not \subseteq U^{'}\).
\end{proof}

For our numerical examples, presented in Section \ref{sec:examples} we
consider the performance criteria
\begin{align}
  \label{eq:objfunc}
  J = \sum\limits_{k=1}^{N} \norm{\text{diag}(\mathbf{R}) u_{k}}_{1},
\end{align}
to minimize the control effort, where \(\text{diag}(\mathbf{R})\)
is a diagonal matrix consisting of the elements of the non-negative
weighting row-vector \(\mathbf{R}\).
Introducing slack variables $s_{kj}$, with $k = 1, \ldots, N$,
$j = 1, \ldots, m$, \(m = m_{c} + m_{l}\),
the objective function can be reformulated as
\begin{align}
  \label{eq:reform-obj}
  J = \sum\limits_{k=1}^{N} \mathbf{R} s_{k},
\end{align}
with an added set of constraints,
\begin{align}
  \label{eq:obj-constraints}
   -s_{kj} \leq u_{kj} \leq s_{kj}
\end{align}
where, \(u_{kj}\) denotes the \(j\)-th component of the control input
\(u_{k}\) at time index \(k\), resulting in a linear optimization
problem.

The trajectory length \(N\) depends on the \textit{bound} of a
\textit{bounded-time} temporal specification (does not contain any
unbounded operator), which is computed to be the ``maximum over the sums
of all nested upper bounds on temporal operators''
\cite{raman2014model}.

\section{Reactive Controller Synthesis}
\label{sec:react-contr-synth}

In this section, we present a receding horizon controller (RHC)
framework to make the MILP controller presented in Section
\ref{sec:ctrlSynthesis}, reactive to dynamically changing predicate
sets. At each step \(i = \{1, 2, \cdots, N\}\)
of the RHC computation, we keep track of the system trajectory for
time indices \( \{0, 1 , \cdots, (i-1)\}\)
and search for a system trajectory of length \((N - i + 1)\),
starting from the initial condition \(x_{i-1}\),
such that the combined system trajectory satisfies the given MTL
specification with the desired robustness measure. While computing for
the system trajectory at step \(6\)
of the Algorithm \ref{alg:2}, we use the open loop controller
presented in the Algorithm \ref{alg:1}, with one slight modification:
searching for the critical time and critical predicate is based on the
combined system trajectory
\(\{x_{0}, \cdots, x_{i-2}, x_{i-1}, \hat{x}_{i}, \cdots,
\hat{x}_{N-i+1}\}\).
Also since, each predicate set can be defined uniquely in terms of the
\((A, b)\)
pair as given in (\ref{eq:predRobust}), the MILP controller takes the
\((A,b)\)
pair as arguments, instead of constant matrices, to account for any
predicate sets that might change through the execution of the whole
control plan.

As the end goal of this work is real-time controller
synthesis, ideally it is desired that the steps \(3\)
to \(8\)
of the Algorithm \ref{alg:2} be executed within the time-step of the
system dynamics, such that at each time-step the system has a control
input signal to execute. However, this is usually not the case in the
experiments we have performed. Step \(6\)
is the rate limiting step of the process, since it involves
iteratively solving an MILP problem if the predicate sets are
dynamically changing and the upper bound to the number of iterations
required to find a feasible system trajectory is
\(O(N \cdot \vert p \vert )\),
where \(\vert p \vert\)
is the number of predicate sets in the MTL specification, in the
worst-case scenario. But the incremental nature of Algorithm
\ref{alg:1} allows us to execute the steps \(3\)
to \(8\)
only for the duration of the time-step of the system
dynamics. Consider the example presented in Fig. \ref{fig:react},
where instead of waiting for a feasible trajectory to be found in a
single iteration (time for which may exceed the time-step), the
controller iteratively finds a system trajectory that approaches
towards a feasible one and requires \(3\)
iterations to find a feasible trajectory after a new unsafe region was
introduced in the workspace. Thus, obtaining a feasible trajectory
with the desired robustness may take a few iterations, but the
Algorithm \ref{alg:2} can essentially be run in real-time for this
example.

\begin{algorithm}[tp]
  \caption{Receding Horizon Controller Synthesis}
  \label{alg:2}
  \begin{algorithmic}[1]
    \State Run Algorithm \ref{alg:1} for trajectory length \(N\).
    \State Set initial condition to be: \(x_{0}\).
    \For{\(i = \{1, \cdots, N+1\}\)}
    \State Set past system trajectory: \( \{x_{0}, \cdots,
    x_{i-1}\} \).
    \State Compute the \((A, b)\) pair defining the predicate sets.
    \State Obtain new system trajectory by running Algorithm
    \ref{alg:1} 
    for trajectory length \((N - i + 1)\): \(\{x_{i-1}, \hat{x}_{i}
    \cdots, \hat{x}_{N-i+1}\}\).
    \State Set initial condition for the next iteration to be: \(\hat{x}_{i}\).
    \EndFor    
 \end{algorithmic}
\end{algorithm}
  
\section{Examples}
\label{sec:examples}

\subsection{Numerical Example}
\label{sec:numerical-example}

For numerical simulations, we consider two different MTL
specifications for motion planning of a mobile robot with unicycle
dynamics, and compare our results by running the same examples with
the \texttt{BluSTL}\footnote{https://github.com/BluSTL/BluSTL/}
toolbox developed using the ideas in \cite{raman2014model,
  raman2015reactive}. All the simulations were performed on a computer
with a \(3.4\)
GHz \textit{Intel core i7} processor with \(16\)
GB of memory using \texttt{Gurobi}\footnote{http://www.gurobi.com/}
solver through \texttt{YALMIP} \cite{lofberg2004yalmip}.

We first consider a simple reach-avoid scenario (see Fig.
\ref{fig:m3pi-simple-MTL}), where the system has to avoid an unsafe
area in its workspace at all times and reach the goal area and stay
there from \(8.5\)
to \(10\)
seconds. These requirements can be represented using MTL
specifications as,
\vspace{-0.025in}
\begin{align*}
  \phi_{1} = \left( \globally \neg p_{\Unsafe} \right) \wedge \left(
  \globally_{[8.5, 10]} ~ p_{\Goal} \right).
\end{align*}
For the second example we consider both the \textit{eventually} and
\textit{globally} operator in a nested fashion to show that our
approach can handle complex task specifications as well. In this
scenario the requirement is still to avoid the unsafe region always
and to eventually reach the goal region sometime within \(5.5\)
to \(7.5\)
seconds and stay there for \(1.5\)
seconds. Formally, the specification is
\begin{align*}
  \phi_{2} = \left(\globally \neg p_{\Unsafe} \right) ~ \wedge
  \eventually_{[5.5, 7.5]} ~
  \left( \globally_{[0,1.5]} ~ p_{\Goal} \right).
\end{align*}
In order to implement our approach we first feedback linearize (see
Section \ref{sec:pract-example}) the unicycle dynamics to obtain
double integrator dynamics in \(X\)
and \(Y\)
directions, governing the evolution of the position of the mobile
robot. We then discretize the continuous system dynamics with a \(0.5\)
second sample time, resulting in a trajectory length of \(N = 20\)
for the cases considered here. 

\begin{figure}[H]
  \centering
  \includegraphics[width=21pc]{./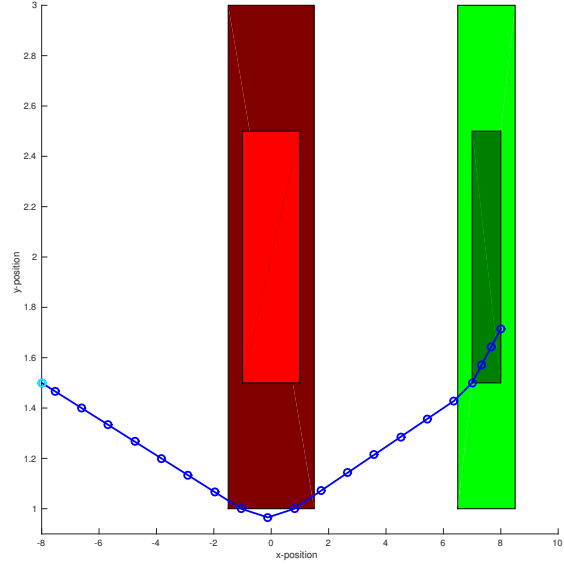}
  % \vspace{-0.5in}
  \caption{Illustration of path planning of mobile robot with the
    simple reach-avoid specification. The unsafe and the goal regions
    are colored red and green respectively. The unsafe set is bloated
    and the goal set is shrunk by the desired robustness degree,
    chosen to be \(0.5\)
    (trajectory points should not be inside the bloated unsafe set and
    outside of the shrunk goal set at the relevant time-instants). The
    system starts from the cyan colored position on the left side of
    the figure. Both BluSTL and our approach produce the same solution
    trajectory.}
  \label{fig:m3pi-simple-MTL}
\end{figure}

  \begin{table}[H]
    \centering
    \begin{center}
    % \resizebox{21pc}{!}{
      \begin{tabular}{c c c c c}
Method & MTL spec & YALMIP Time (s)  & Open Loop Time (s) &
  RHC Time (s)\\
\toprule
Our Approach & \(\phi_{1}\)   & 3.12 \(\pm\) 0.49 & 0.98 \(\pm\) 0.13 & 10.02 \(\pm\)
  0.004 \\
& \(\phi_{2}\)   & 3.02 \(\pm\) 0.78 & 0.84 \(\pm\) 0.04 & 10.02
                                           \(\pm\)
                                                           0.001 \\
        \midrule
\texttt{BluSTL} & \(\phi_{1}\)   & 38.22 \(\pm\) 2.16 & 5.83 \(\pm\) 0.21 & 60.62 \(\pm\) 1.06 \\
& \(\phi_{2}\)   & 39.21 \(\pm\) 2.52 & 77.63 \(\pm\) 2.15 & 1401.37
                                                             \(\pm\)
                                                             19.50 \\ 
% \(\phi_{1, \text{quadrotor}}\)   & 6 & 2 & 1 \\
% \(\phi_{2, \text{quadrotor}}\)   & 6 & 2 & 1 \\
        \bottomrule 
      \end{tabular}
     % }
    \vspace{0.05in}
\caption{Time taken to solve the path planning problem.}
\label{tab:timing-results}
\end{center}
\end{table}

\begin{figure}[H]
%  \vspace{-0.5in}
  \centering
  \includegraphics[width=21pc]{./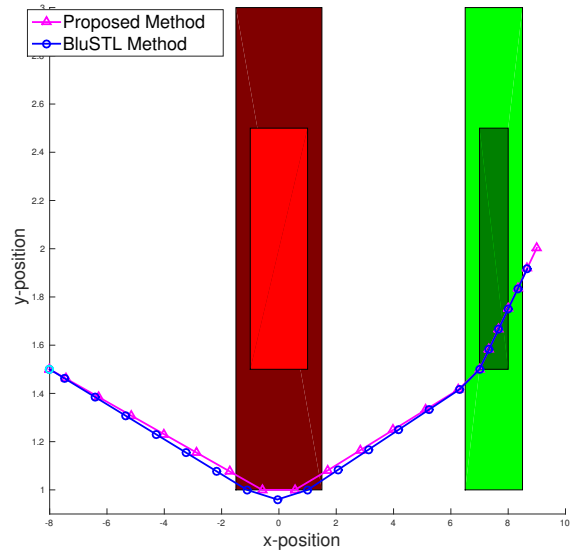}
%  \vspace{-0.4in}
  \caption{Path planning solutions of the mobile robot with complex
    task specifications. The workspace is same as before.}
  \label{fig:m3pi-ev-MTL}
\end{figure}

As expected, based on Theorem \ref{thm:1}, both \texttt{BluSTL} and
our approach produce the same optimal path for the specification
\(\phi_{1}\)
as shown in Fig. \ref{fig:m3pi-simple-MTL}. However, for the task
specification \(\phi_{2}\),
because of the presence of the \textit{eventually} operator in the
specification \texttt{BluSTL} provides a more optimal path than our
approach\footnote{Open loop and closed loop receding horizon
  implementation in the BluSTL toolbox actually produced different
  trajectories even with the exact same system parameters and
  specifications; closed loop trajectory obtained was non-optimal.},
even though both the solutions satisfy the specification \(\phi_{2}\)
with the desired robustness as shown in Fig. \ref{fig:m3pi-ev-MTL}. In
Table \ref{tab:timing-results}, we present the results obtained for
the path planning problems in 'mean \(\pm\)
standard deviation' format obtained over \(10\)
independent runs of the same problem. \textit{YALMIP Time} represents
the time taken to encode the controller and \textit{Open Loop Time}
and \textit{RHC Time} are the times required to generate the feasible
path in the open loop fashion and by using the receding horizon
controller approach respectively using the \texttt{Gurobi}
solver. Note that, the \textit{RHC Time} represents the time taken for
planning the path over \(N = 20\)
time-steps each of which is of duration \(0.5\)
seconds. The timing results clearly shows that our approach is much
faster in obtaining the solution trajectory as compared to
\texttt{BluSTL}, specifically in the case of the more complex
specification \(\phi_{2}\).
As our end goal is to implement this controller synthesis procedure in
a practical situation, being able to plan the path in a short amount
of time is of a great importance.

\subsection{Practical Example}
\label{sec:pract-example}

Experimentally, we determine the efficacy of our MILP approach for
reactive controller synthesis using a \textit{m3pi} robot with a
differential drive system dynamics. The \textit{m3pi} robot consists
of a \textit{3pi} robot base connected to a \textit{m3pi} expansion
board that allows us to communicate with
the robot using \textit{XBee} wireless communication module to send
control input signals from a workstation to the robot.

% \begin{figure}[h]
%   \centering
%   \includegraphics[width=21pc]{./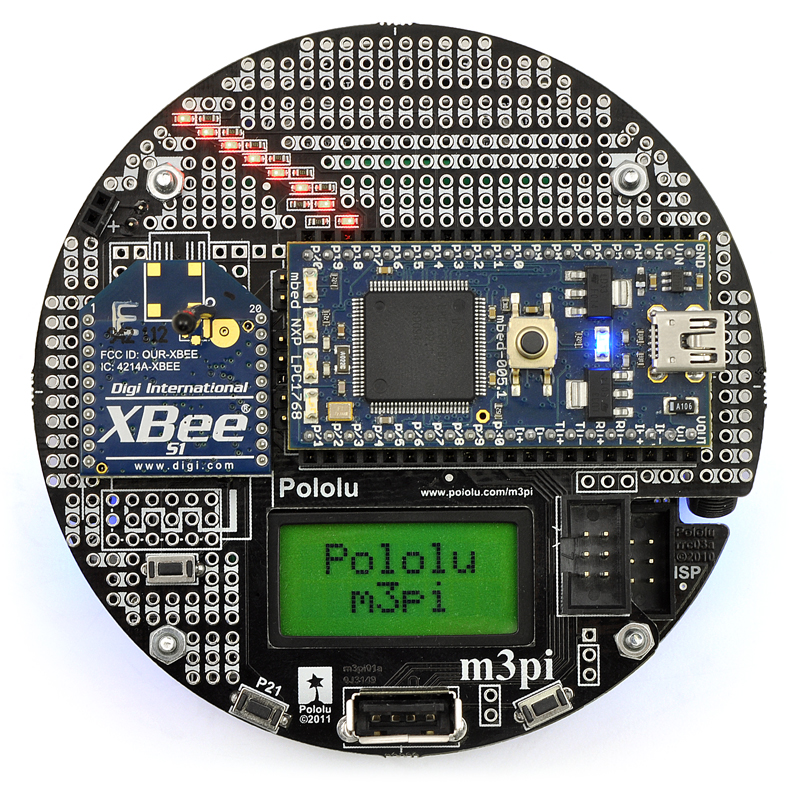}
%   \caption{Front view of the \textit{m3pi} robot}
%   \label{fig:m3pi}
% \end{figure}

Denoting the
position of the robot in a \(2\)D
plane to be \((x, y)\),
the equations of motion governing the system dynamics are
\begin{align}
  \begin{aligned}
  \dot{x} &= \frac{v_{r} + v_{l}}{2} \cos(\theta) = v \cos(\theta), \\
  \dot{y} &= \frac{v_{r} + v_{l}}{2} \sin(\theta) = v \sin(\theta),\\
  \dot{\theta} &= \frac{v_{r} - v_{l}}{2d} = \omega,
  \label{eq:sys-dyn-unicycle}
\end{aligned}
\end{align}
where, $\theta$ denotes the orientation of the robot with respect to
the coordinate frame of reference and $v_{r}$ and $v_{l}$ are the
wheel speeds of the right and left wheels respectively and are the
control input signals to the robot. Linear and angular velocities of
the robot are denoted by \(v\) and \(\omega\) respectively for ease of
notation and \(d\) is the distance of any one wheel from the center of
the robot base. To design a controller for the
unicycle agents we utilize a well-known theory from nonlinear control
called feedback linearization \cite{Khalil} so that the nonlinear
dynamics presented in (\ref{eq:sys-dyn-unicycle}) can be represented
by second order particle dynamics in two dimensions. Choosing the
position of the robot as the system output notice that,
%\vspace{-0.3in}
\begin{align*}
  \begin{aligned}
   \begin{bmatrix}
      \ddot{x} \\ \ddot{y} 
    \end{bmatrix} =
    \begin{bmatrix}
      \cos(\theta) & -\sin(\theta) \\ \sin(\theta) &
      \cos(\theta)
    \end{bmatrix}
    \begin{bmatrix}
      \dot{v} \\ v \omega
    \end{bmatrix}.
  \end{aligned}
 % \vspace{-0.1in}
\end{align*}

Then, by choosing the intermediate control inputs to the robot to be,
\(\dot{v}\) and \(v\omega\) such that,
%\vspace{-0.2in}
\begin{align}
  \begin{bmatrix}
    \dot{v} \\ v \omega
  \end{bmatrix} = \begin{bmatrix} \cos(\theta) & \sin(\theta) \\
    -\sin(\theta) & \cos(\theta)
  \end{bmatrix}
  \begin{bmatrix}
    u_{x} \\ u_{y}
  \end{bmatrix},
  \label{eq:non-linear-ctrl}
\end{align} where, \(u_{x}\) and \(u_{y}\) are the new control
inputs to be determined, leads to the input-output feedback linearized
system corresponding to second order particle dynamics in the form of
a chain of integrators as,
\begin{align}
  \begin{bmatrix}
    \ddot{x} \\ \ddot{y}
  \end{bmatrix} =
  \begin{bmatrix}
    u_{x} \\ u_{y}
  \end{bmatrix}.
\end{align}
These new control inputs \(u_{x}\)
and \(u_{y}\)
are then generated using the MILP approach presented in Section
\ref{sec:ctrlSynthesis}. Using (\ref{eq:non-linear-ctrl}), one can
derive the control inputs \(v\)
and \(\omega\), and then the original control inputs
\begin{align*}
  v_{r} = v + d\omega \quad \text{and} \quad  v_{l} = v - d\omega,
\end{align*}
so that the resulting system trajectory of the system satisfies the
given MTL specification.

\begin{figure}[H]
  \centering
  \includegraphics[scale=0.3]{./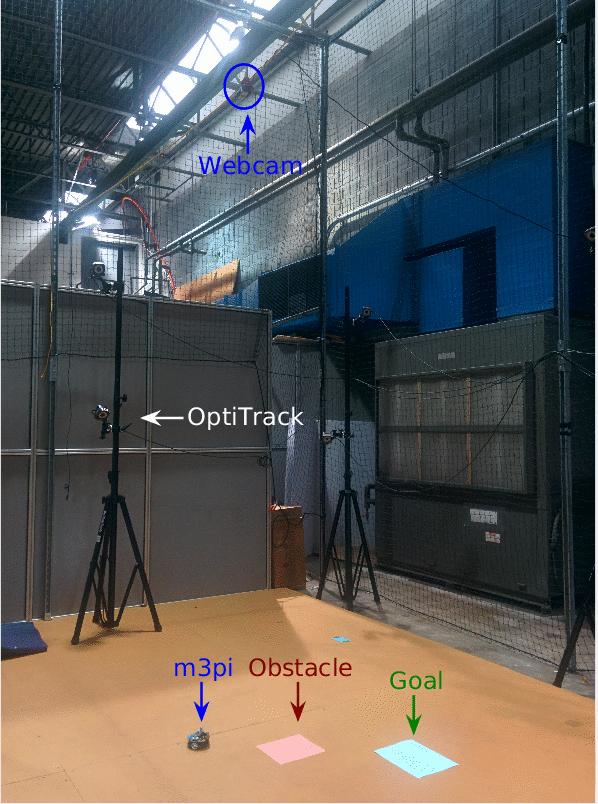}
   \caption{Experimental Setup for path planning of \textit{m3pi}
     robot}
   \label{fig:expt-setup}
\end{figure}

The workspace environment for running the experiments is shown in
Fig. \ref{fig:expt-setup}. We use an overhead webcam to obtain images
of the environment, which are then processed to determine the
locations of the predicates, represented by the papers laid on the
floor. The OptiTrack system is used to track the position of the
\textit{m3pi} robot accurately. A video of three different
experiments, as well as an user-interactive example
are uploaded \href{http://tinyurl.com/sahaResearch#MILP-MTL}{here}
\footnote{http://tinyurl.com/sahaResearch\#MILP-MTL}. Here, we present
a simulated version of one of the practical experiments in
Fig. \ref{fig:react}. The desired task specification is again a
reach-avoid criteria
\begin{align*}
  \phi_{3} = \left( \globally \neg p_{\Unsafe_{1}} \right) \wedge \left(
  \globally_{[17.5, 20]} ~ p_{\Goal} \right),
\end{align*}
where \(\Unsafe_{1}\)
is a previously known unsafe region in the workspace of the robot as
shown in Fig. \ref{fig:react-1}. We assume that the workspace is
changing dynamically such that as the robot is navigating through the
workspace towards the goal region, it might come across another unsafe
region all of a sudden. With this knowledge we encode our MILP
controller with one safe predicate and two unsafe predicates as
detailed in Section \ref{sec:ctrlSynthesis}. Since, we pass the
\((A,b)\)
pair defining any predicate as parameters to the MILP controller we do
not need to know the exact location of the \(\Unsafe_{2}\)
at the beginning of the path planning problem. When the system becomes
aware of the new unsafe region it updates the corresponding \((A,b)\)
pair values and solves for a new feasible trajectory from its current
position using Algorithm \ref{alg:2}. As shown in
Fig. \ref{fig:react}, a previously unknown unsafe area pops up
at \(7.5\)
seconds and as can be seen from Fig. \ref{fig:react-3} -
Fig. \ref{fig:react-5} the robot finds a feasible trajectory in \(3\)
time-steps, due to the reason that we limit the duration of steps
\(3\)
to \(8\) of the Algorithm \ref{alg:2} to time-step of \(0.5\) seconds.

% \begin{figure*}[!t]
%   \centering \subfloat[][Feasible trajectory obtained using
%   BluSTL.]{\includegraphics[scale=0.2]{./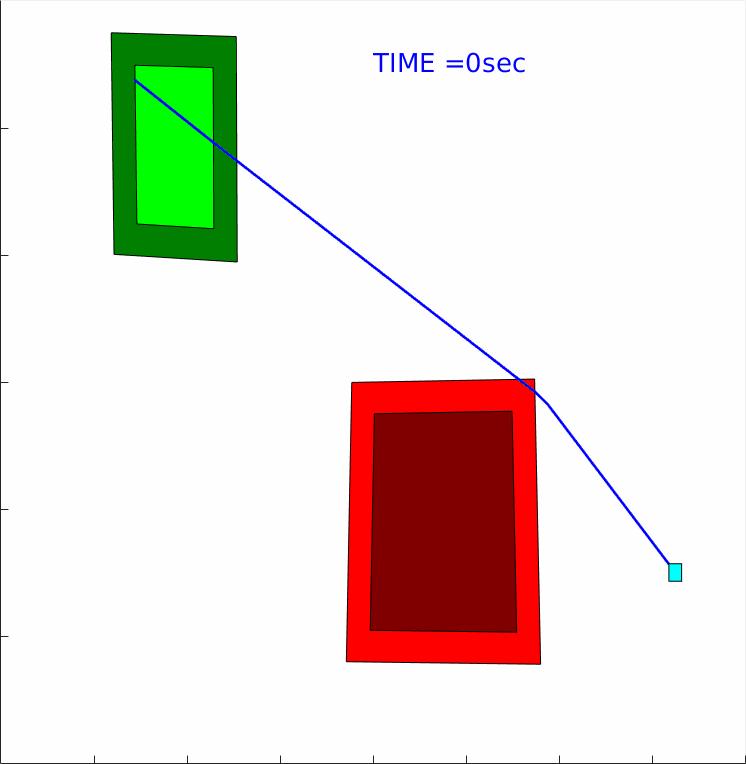}
%     \label{fig:react-sim-1}} \hfil
%   \subfloat[Feasible trajectory obtained
% using our
% approach.]{\includegraphics[scale=0.2]{./pics/react-sim-2}
%   \label{fig:react-sim-2}} \hfil
% \subfloat[Feasible trajectory obtained
% using our
% approach.]{\includegraphics[scale=0.2]{./pics/react-sim-3}
%   \label{fig:react-sim-2}}
% \caption{Path planning solutions of the mobile robot with complex task
% specifications. The workspace is same as before.}
% \label{fig:react-sim}
% \end{figure*}

\begin{figure*}[tpbh]
  \vspace{-0.5in}
\centering
\begin{tabular}{ccc}
  \subfloat[\label{fig:react-1} Known environment configuration.]{\includegraphics[scale=0.2]{./pics/react-sim-1.jpg}} & 
  \subfloat[\label{fig:react-2} New unsafe region
                                                                                                                     appears.]{\includegraphics[scale=0.2]{./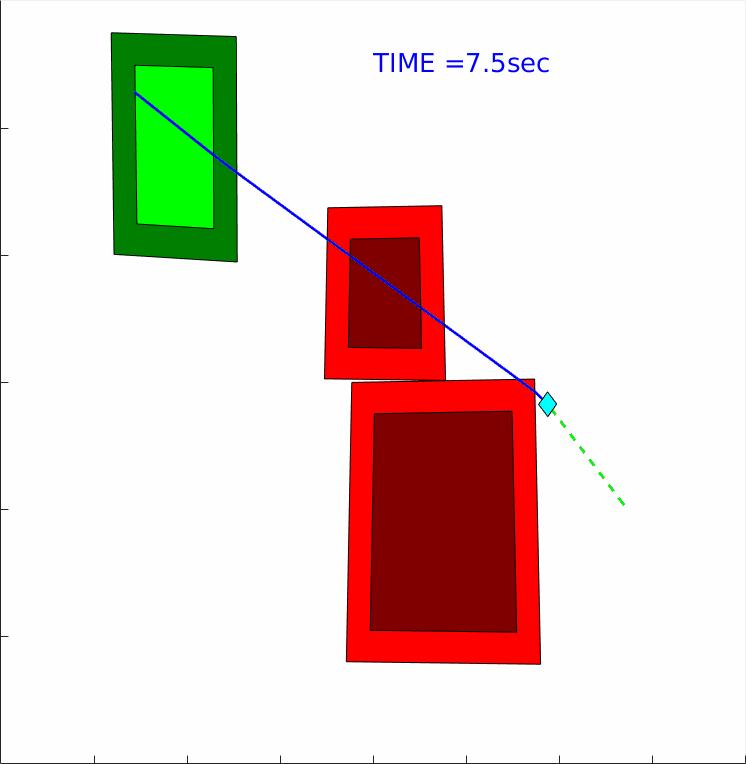}} \\ 
                                                                                                             \subfloat[\label{fig:react-3}
                                                                                                              Due
                                                                                                              to
                                                                                                              limited
                                                                                                              computational
                                                                                                              time
                                                                                                              a
                                                                                                              feasible
                                                                                                              trajectory
                                                                                                              is
                                                                                                              not
                                                                                                              found
                                                                                                              yet.]{\includegraphics[scale=0.2]{./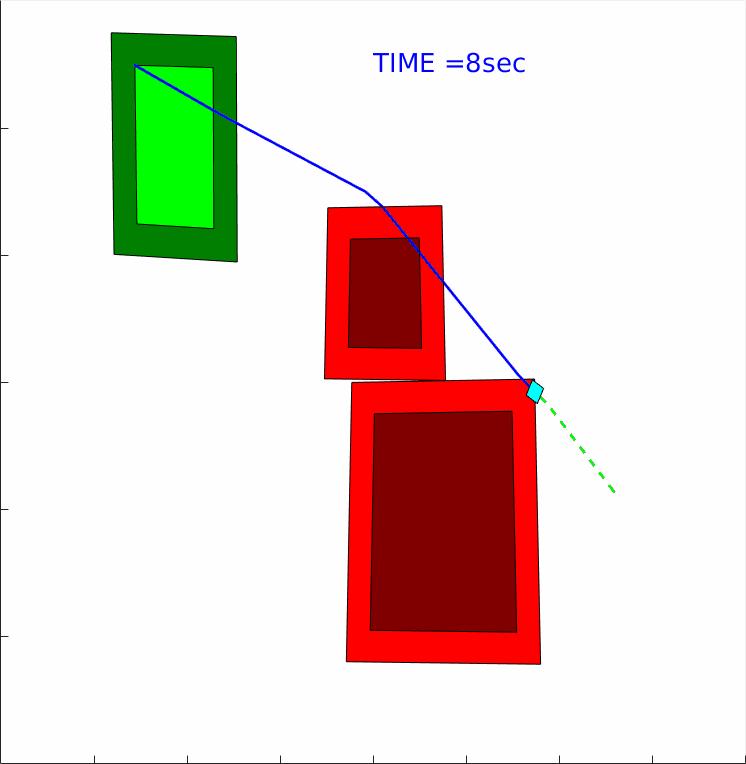}} &
  \subfloat[\label{fig:react-4} Still searching for a feasible trajectory.]{\includegraphics[scale=0.2]{./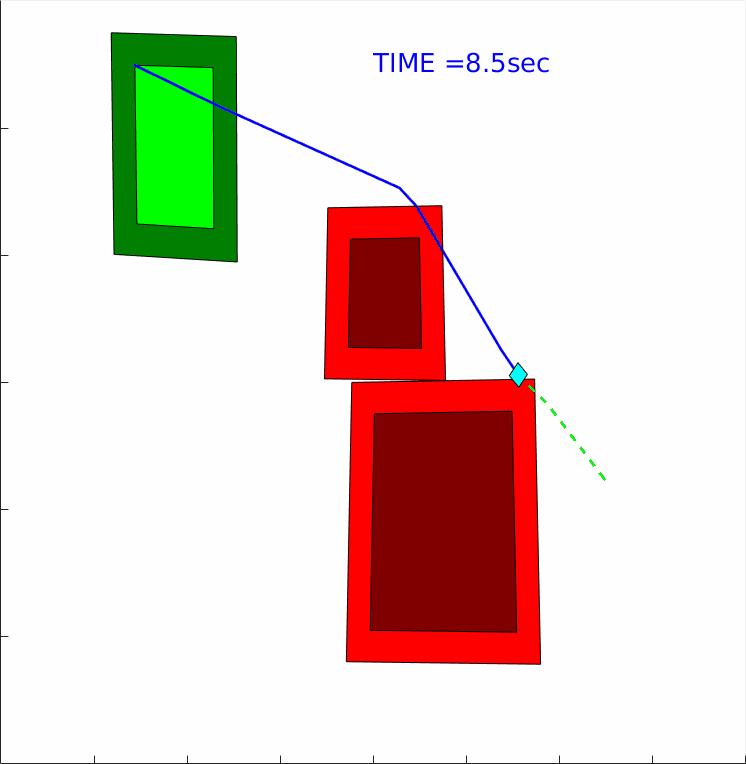}} \\
  \subfloat[\label{fig:react-5} A feasible trajectory is found at the
                                                                                                                              third
                                                                                                                              time-step
                                                                                                                              after
                                                                                                                              the
                                                                                                                              new
                                                                                                                              unsafe
                                                                                                                              region
                                                                                                                              appeared.]{\includegraphics[scale=0.2]{./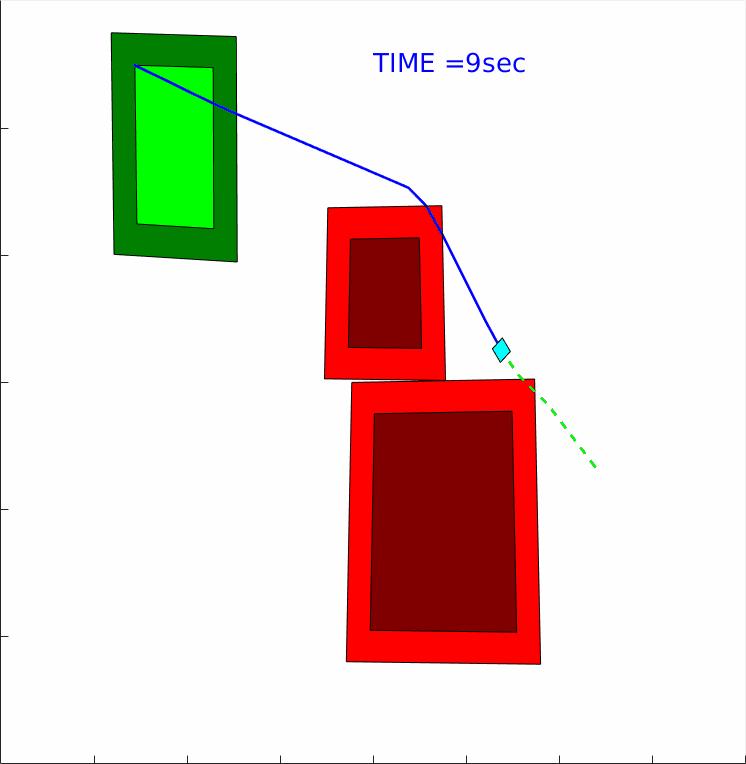}}
                                                                                                            &
  \subfloat[\label{fig:react-6} Final path taken by the robot to
                                                                                     execute
                                                                                     its
                                                                                     task specification.]{\includegraphics[scale=0.2]{./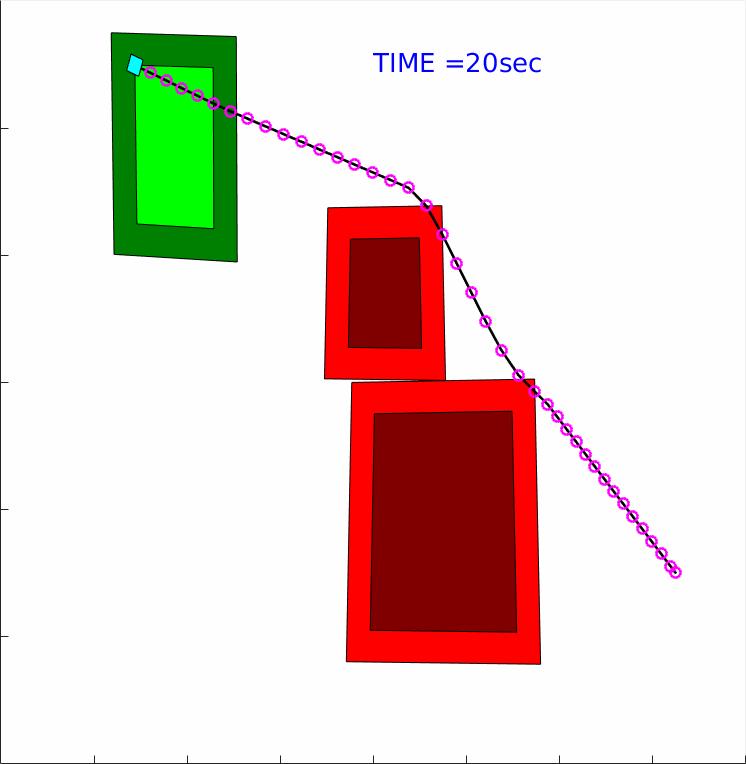}}
  \\
% \subfloat[G]{\includegraphics{logo}}&
% \subfloat[H]{\includegraphics{logo}} \\
\end{tabular}
\caption{Real-time path planning in a dynamically changing environment.}
\label{fig:react}
\vspace{-0.1in}
\end{figure*}

\section{Conclusion and Future Work}
\label{sec:conclusion}
In this paper, we consider the problem of optimizing the inputs to an
MLD system such that it satisfies an MTL specification.  We do this by
finding the trajectory points that most violate the specification,
constraining them to satisfy the corresponding predicate, and resolving
the resulting MILP optimization problem.  Although this problem can be
fully formalized as an MILP optimization problem and solved directly,
this introduces a number of binary variables and constraints that are
linear in the length of the trajectory and size of the MTL
specification. Our approach iteratively adds constraints, and require
solving MILPs multiple times rather than once but can yield a low-cost
feasible solution much faster by considering the smaller MILPs,
instead of one large MILP.

We present the efficacy of our approach by finding feasible
trajectories corresponding to two different MTL specifications by
solving the optimization problem for a mobile robot in a few
seconds. The numerical results show that this approach can generate
feasible trajectories by adding only a few of the binary variables and
constraints that would otherwise have been added. This motivates the
current use versus the full MILP, as is usually done, which would have
included hundreds of binary variables and constraints. We also show
the reactiveness of the proposed approach by implementing the
controller in real-time on a \textit{m3pi} robot in a dynamically
changing environment. Future work involves exploring heuristic
approaches so as to use a linear combination of the time-points at
which the MTL specification is violated to add the constraint, rather
than just using the critical time-points.

\section{Acknowledgments}

The research reported in this paper is partially supported by the NSF
through the grants CAREER CNS-0953976 and CNS-1218109. The authors
would like to thank Andrew Winn for some very constructive
discussions.

\bibliographystyle{IEEEtran}   % Set any options you want
\bibliography{acc16arxiv}  % Build the bibliography

\end{document}